\documentclass[11pt]{article}
\usepackage{axient_ffra_compat}
\usepackage{algorithm,algpseudocode,placeins,url,seqsplit}
\hypersetup{colorlinks=true,linkcolor=black,citecolor=black,urlcolor=blue!55!black,
  pdftitle={Axient: Manifest-Bound Evidence for On-Chain Financial Protocols: Seven-Layer Derivation, Correlation, Tamper Rejection, and Reproducible Claim Promotion},
  pdfauthor={Maksym Nechepurenko},
  pdfsubject={Manifest-bound evidence and reproducible claim promotion for hybrid financial protocols},
  pdfkeywords={end-to-end evidence, provenance, reproducibility, smart contracts, financial protocols, manifest binding, tamper rejection, replay, auditability}}
\selectfont
\newcolumntype{P}[1]{>{\raggedright\arraybackslash}p{#1}}
\crefname{definition}{definition}{definitions}
\crefname{assumption}{assumption}{assumptions}
\crefname{proposition}{proposition}{propositions}
\crefname{theorem}{theorem}{theorems}
\crefname{corollary}{corollary}{corollaries}
\crefname{remark}{remark}{remarks}
\definecolor{axblue}{HTML}{4263EB}
\definecolor{axteal}{HTML}{14B8A6}
\definecolor{axnavy}{HTML}{0B172A}
\definecolor{axgray}{HTML}{EFF3F8}
\newcommand{\Hh}{\mathsf{H}}
\newcommand{\interimnote}{\begin{center}\fcolorbox{axteal!70!black}{axteal!6}{\parbox{0.93\textwidth}{\centering\textbf{Registered local evidence result.} Twelve uniquely correlated FIA records were retained with 84 \texttt{ACTUAL\_EVIDENCE} positions, two matching clean materializations, a deterministic strict archive accepted by a separate verifier, and a selected ordered replay bound to the archive.}}\end{center}}

\title{\textbf{Axient: Manifest-Bound Evidence for On-Chain Financial Protocols: Seven-Layer Derivation, Correlation, Tamper Rejection, and Reproducible Claim Promotion}}
\author{Maksym Nechepurenko\\\small Research Department of Devnull FZCO, Dubai, UAE\\\small \href{mailto:maksym@devnull.ae}{maksym@devnull.ae} \;|\; \href{https://orcid.org/0000-0002-9515-8841}{ORCID: 0000-0002-9515-8841}}
\date{30 September 2026}
\ffraSetup{series={Axient Research \textperiodcentered\ Paper 7},author={Maksym Nechepurenko},affiliation={Research Department of Devnull FZCO, Dubai, UAE},email={maksym@devnull.ae},orcid={0000-0002-9515-8841},version={r0.6.4},date={30 September 2026}}

\begin{document}
\maketitle

\begin{abstract}
Hybrid on-chain financial protocols are frequently evaluated with evidence that is individually useful but collectively insufficient: a unit test, transaction receipt, screenshot, service response, or several matching hashes may be presented as proof of a complete workflow even when the layers share one generated source or omit the financial authority that matters. This paper develops a manifest-bound evidence architecture for such systems. A sealed execution manifest binds declared system identities, roles, schemas, and release identity. Seven evidence layers are retained with registered derivations and linked by a common correlation identity. A conjunctive claim-promotion rule requires provenance, predicate-specific root conditions, negative controls, reproducible packaging, historical replay, and independent review in addition to hash equality. We formalize the evidence graph, prove that copied hashes cannot establish independent derivation, show how collision-resistant commitments make specified tampering detectable, define exactly-once financial-effect evidence, and distinguish original historical identities from later reproducibility attestations. The registered local case study records twelve parent Financial Interaction Assertions in one accepted cohort: 84 retained \texttt{ACTUAL\_EVIDENCE} positions, two clean materializations with one byte-identical file index, a deterministic strict archive accepted by a separate verifier, and a selected ordered replay bound to that archive. FIA-011 is conjunctive and retains both stale-before-admission and book-epoch-change-mid-position subcases. The result demonstrates a reviewable chain of custody from financial assertion to replay-bound package while preserving the difference between observed evidence, interpretation, and future hypotheses.
\end{abstract}

\noindent\textbf{Keywords:} end-to-end evidence; smart contracts; reproducibility; provenance; canonicalization; auditability; tamper detection; distributed systems; financial protocols.

\section{Introduction}
A complex financial protocol spans more than a blockchain. A user acts through an interface; an API authenticates and validates the command; a durable service orchestrates local or external execution; a smart contract changes financial state; indexers and journals reconstruct the result; and an audit surface presents a human-readable trace. The phrase ``end-to-end'' suggests that the same action has been exercised and observed across the full chain. Yet the evidence commonly used to support that claim is heterogeneous and easy to overstate.

A transaction receipt establishes that one transaction was mined. It does not establish that the intended user command produced it, that the called selector belonged to the approved deployment, that the service projection was derived from the receipt, or that the audit interface did not display fixture data. A screenshot establishes that pixels were rendered. It does not establish the state source. Seven equal hashes may still arise from one generator copied into seven files. A green CI run confirms only the bounded task encoded in that workflow. These are not failures of the artifacts themselves; they are failures of claim typing.

The Axient programme treats evidence as a first-class system. Its public evidence companion records one twelve-parent registered local cohort, 84 retained evidence positions, two matching clean materializations, a deterministic strict archive accepted by a separate verifier, and a selected ordered replay bound to that archive \citep{axientfia2026}. The present paper generalizes the resulting architecture beyond event markets. Its object is a manifest-bound proof chain for hybrid financial protocols.

\paragraph{Contributions.}
\begin{enumerate}[label=(C\arabic*)]
\item A typed evidence model that distinguishes observed, reconstructed, replayed, and policy-level claims.
\item A sealed execution manifest that binds code, configuration, roles, selectors, adapters, schemas, and release identity.
\item A seven-layer evidence graph with separately retained inputs, registered derivations, and a shared correlation identity.
\item A proof that hash equality alone cannot establish independent derivation.
\item A claim-promotion predicate that is conjunctive rather than compensating.
\item Tamper, drift, omission, duplication, and forbidden-authority controls.
\item A reproducible retained-package specification with allowlists, manifests, SBOMs, clean runs, and independent review.
\item A historical provenance model that distinguishes original release identities from later reproducibility attestations.
\item An exactly-once recovery evidence object that binds worker failure and projection rebuild to an already committed operation without a duplicate financial effect.
\item A twelve-parent FIA case study, including mandatory FIA-011A/B subcases, and a strict record of what remains not proven.
\end{enumerate}

\section{Problem statement and threat model}
\subsection{What an E2E claim means}
Let $q$ be a protocol claim, such as ``a partial settlement reduced debt once and appeared consistently in the user and audit interfaces.'' An end-to-end claim is not a single Boolean test. It is a statement about a chain of related observations under a specific system identity.

Let
\[
\mathcal L=\{L_1,L_2,\ldots,L_7\}
\]
be evidence layers and let $S$ be a sealed execution manifest. A claim $q$ is supported only if each required layer has a registered derivation from retained inputs, all layers bind to the same manifest and correlation identity, the required derivation-root topology is satisfied, specified invalid mutations are rejected (so the corresponding negative-control tests pass), and the complete package is reproducible and independently reviewable.

\subsection{Adversarial evidence failures}
The threat model includes both malicious and accidental failures:
\begin{itemize}
\item a result is detached from the commit or deployment that produced it;
\item an API points to a stale address or selector;
\item fixture expectations are copied into actual-state fields;
\item multiple projections are generated from one sealed artifact and presented as independent runtime observations;
\item evidence records or layers from distinct runs, manifests, or correlations are spliced into one synthetic cohort claim;
\item event ordering is inferred from file order rather than chain coordinates;
\item a journal entry is omitted, duplicated, or reordered;
\item a browser displays cached or reference data;
\item a package includes unallowlisted files, secrets, logs, or a self-referential checksum;
\item a clean-run comparison excludes semantically relevant differences;
\item a worker restarts after chain confirmation and financially re-executes the command;
\item a historical phase is assigned a retroactive tag or expected hash that never existed;
\item a green bounded test is promoted into a broader claim.
\end{itemize}

The goal is not to make all evidence impossible to forge in an absolute sense. The goal is to make the claim boundary explicit, machine-checkable, and difficult to broaden without producing new evidence.

\section{Typed evidence and claim taxonomy}
\begin{definition}[Evidence item]
An evidence item is a tuple
\[
E=(\tau,s,d,h,c,m),
\]
where $\tau$ is the evidence type, $s$ the retained source identity, $d$ the derivation description, $h$ the digest, $c$ the correlation identity, and $m$ the manifest identity.
\end{definition}

The primary evidence types are:
\begin{itemize}
\item \texttt{FORMAL}: theorem under stated assumptions;
\item \texttt{DETERMINISTIC\_BENCHMARK}: exact registered cases;
\item \texttt{OBSERVED\_LOCAL}: chain receipt, state, or process fact observed locally;
\item \texttt{RECONSTRUCTED\_PROJECTION}: materialized service/API/browser/audit state;
\item \texttt{MECHANICAL\_REPLAY}: result of a frozen workflow replay;
\item \texttt{POLICY}: owner or governance design choice;
\item \texttt{INCOMPLETE}: a required gate inside the registered scope lacks evidence;
\item \texttt{NOT\_EVALUATED}: a broader claim lies outside that tested scope.
\end{itemize}

\begin{proposition}[Non-substitutability of evidence types]
No number of reconstructed projections can substitute for an absent observed canonical transaction or state transition, and no number of unit tests can substitute for an absent historical replay when the claim includes replayability.
\end{proposition}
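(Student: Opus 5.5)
The plan is to formalize ``substitution'' as a statement about what a set of evidence items can certify, and then argue by constructing indistinguishable worlds. Fix a claim $q$ whose support requires an item of type \texttt{OBSERVED\_LOCAL} certifying a canonical transaction or state transition $t$. Take any finite family $\{E_i=(\tau_i,s_i,d_i,h_i,c,m)\}_{i=1}^n$ with every $\tau_i=\texttt{RECONSTRUCTED\_PROJECTION}$. By the evidence-item definition, each $E_i$ is the output of a registered derivation $d_i$ applied to retained sources $s_i$. The key modelling step is that a projection's derivation never reads the chain directly; it is a function of its declared inputs. So the whole family is a function of the union $\bigcup_i s_i$ together with the manifest $m$.

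The core of the argument is an indistinguishability construction. I would exhibit two executions $W_1$ and $W_0$ sharing the same manifest $m$ and correlation $c$. In $W_1$ the transition $t$ occurred. In $W_0$ it did not, but the retained projection inputs are populated identically, for instance from a fixture or from a single sealed artifact, which are exactly the failure modes listed in the threat model. Every derivation $d_i$ is deterministic on identical inputs, so it yields identical $(s_i,h_i)$ in both worlds. Any predicate evaluated only over the projection family therefore returns the same verdict in $W_0$ and $W_1$. It cannot certify $t$, because it would then wrongly certify $t$ in $W_0$. Since $n$ was arbitrary, increasing $n$ changes nothing, and this gives the first clause. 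I would also note the connection to the paper's later claim that copied hashes cannot establish independent derivation: agreement among the $h_i$ is itself invariant across the two worlds.

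The second clause follows the same template. Replayability is a property of a frozen historical workflow bound to an original release identity, namely the existence of a \texttt{MECHANICAL\_REPLAY} item whose source is the retained historical package. Unit tests are \texttt{DETERMINISTIC\_BENCHMARK} or \texttt{FORMAL} items whose sources are the current code and registered cases. I would build two histories that agree on the current code, and hence on every unit-test outcome, but differ in whether the archived package exists and replays. One example is a historical phase whose package was never retained, or was assigned a retroactive tag. In that case every unit test passes in both histories while replayability holds in only one.

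The main obstacle is making ``substitute'' precise without trivializing the result. The proof needs the assumption that typed claim support is determined by the types and sources of the items. Equivalently, it needs that a \texttt{RECONSTRUCTED\_PROJECTION} carries no information about the chain beyond its retained inputs. I would state this explicitly as the typing discipline of this section and derive the proposition relative to it. I would then remark that without that discipline the statement is false only in the degenerate case where a ``projection'' secretly embeds the observed receipt, in which case it should have been typed \texttt{OBSERVED\_LOCAL}.
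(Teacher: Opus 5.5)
Your proposal is correct, but it takes a different route from the paper. The paper's proof stays entirely at the level of claim typing. Each evidence type asserts its own predicate: a projection asserts consistency with its source, and a unit test asserts a property of the test environment. The paper then concludes that summing items of one type ``cannot establish a logically distinct missing predicate.'' You instead prove non-entailment semantically. Your worlds $W_0,W_1$ agree on every retained projection input (in the second clause, on the current code and hence on every unit-test outcome) but differ on whether $t$ occurred (in the second clause, on whether the historical package exists and replays). This is the generator-substitution device the paper reserves for the common-root limitation and copied-hash insufficiency results. Your route adds rigor exactly where the paper's argument is thin. ``Logically distinct'' does not by itself mean ``not implied'', since distinct predicates can still entail one another. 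Your counter-model is what actually shows that all the projection predicates can hold while the transition predicate fails. It also makes you state the needed assumption explicitly, which the paper's typing leaves implicit: a projection is a function of its retained inputs, and once observed items are absent those inputs are realizable without the transition. The paper's version gains brevity and does not depend on any particular fabrication scenario. Two minor points. First, unit tests are not \texttt{FORMAL} items, since that type is reserved for theorems under stated assumptions; \texttt{DETERMINISTIC\_BENCHMARK} or a locally observed process fact fits better. Second, you do not need $d_i$ to be deterministic, because a generator in $W_0$ can simply emit the same bytes. Neither point affects your argument, which only uses that projection outputs and unit-test outcomes are fixed by inputs shared across the two worlds.
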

\begin{proof}
The evidence types assert different predicates. A projection asserts consistency with its source, not occurrence of a canonical financial transition. A unit test asserts a property of the test environment, not reproducibility of a historical workflow. Summing evidence of one type cannot establish a logically distinct missing predicate.
\end{proof}

\section{Sealed execution manifest}
The manifest anchors what system was exercised. It should contain:
\begin{enumerate}[label=(\roman*)]
\item source commit or tag and repository identity;
\item chain ID and execution mode;
\item contract addresses, creation/deployed bytecode, ABIs, and storage layouts;
\item roles, owners, selector policy, and pause target set;
\item adapter and venue-emulator identity;
\item OpenAPI, schema, generated-client, and UI build identities;
\item dependency lock and toolchain identities;
\item a digest stored outside the manifest body that it hashes.
\end{enumerate}

Let $S$ be the RFC 8785 canonical JSON representation of the manifest body, excluding its own digest and downstream evidence/package digests \citep{rfc8785}. The receipt wrapper is formed only after hashing. Its digest is
\[
h_M=\Hh(S).
\]

\begin{definition}[Manifest-bound evidence]
An evidence item is manifest-bound if its canonical representation includes $h_M$ and the verifier rejects any mismatch between $h_M$ and the runtime identities used to produce the item.
\end{definition}

\begin{proposition}[Drift visibility]
If address, bytecode, ABI, role, selector, adapter, schema, or client identity is included in $S$, any change to those fields changes $h_M$ under collision resistance. A result under the changed system cannot retain the old manifest identity without a hash collision or a verifier defect.
\end{proposition}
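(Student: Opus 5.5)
The argument is a contrapositive reduction to collision resistance of $\Hh$, composed with injectivity of RFC 8785 canonicalization; the manifest-bound definition then carries it to the second sentence. Let $S$ and $S'$ be the canonical manifest bodies before and after the change. The change affects at least one field among address, bytecode, ABI, role, selector, adapter, schema, or client identity. By the construction of the manifest, each such field is a member of the body being hashed: it is not among the excluded items (the self-digest and the downstream evidence/package digests). This is what lets a field change reach $S$.

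The first step is to show $S\neq S'$. RFC 8785 canonicalization is a deterministic function of the JSON value, and it is injective on values: two distinct JSON values cannot canonicalize to the same byte string. The reason is that canonical form fixes key order, number serialization and string escaping, and parsing the canonical bytes recovers the value. So a change of a field value, with the key set unchanged or changed, gives distinct byte strings.

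The second step is the hash comparison. If $h_M=\Hh(S)=\Hh(S')=h_M'$ with $S\neq S'$, the pair $(S,S')$ is a collision. So a probabilistic-polynomial-time process that causes such drift while preserving $h_M$ can be turned into a collision finder. Under the collision-resistance assumption this succeeds with only negligible probability. I would phrase the proposition's "changes $h_M$" in exactly this computational sense, not as an absolute guarantee.

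The third step covers the second sentence of the proposition. Take a result produced under the changed system that still carries the old $h_M$. By the definition of manifest-bound evidence, the verifier recomputes the digest of the runtime identities, i.e.\ $\Hh(S')$, and compares it to the embedded $h_M$. Acceptance has only two possible causes:
\begin{itemize}
\item $\Hh(S')=h_M=\Hh(S)$, which is the collision of the second step;
\item the verifier did not recompute $S'$ faithfully from the actual runtime identities, which is by definition a verifier defect.
\end{itemize}

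The main obstacle is the first step, together with the implicit assumption hiding in the third: that the runtime identities are faithfully captured into $S'$. Injectivity needs care because RFC 8785 canonicalizes I-JSON values. Number normalization such as $1.0$ versus $1$ could merge distinct inputs, and I would handle it by requiring identity fields to be strings, such as hex addresses and digests. The faithful-capture assumption I would fold explicitly into "no verifier defect", so the dichotomy in the statement is exhaustive.
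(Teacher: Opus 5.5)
Your proposal is correct and is essentially the argument the paper relies on. The paper gives no separate proof and treats the proposition as immediate from three facts: the fields' inclusion in the hashed body $S$, collision resistance of $\Hh$, and the rejection rule in the definition of manifest-bound evidence, which is exactly your collision-or-verifier-defect dichotomy. Your added care about canonicalization is a worthwhile sharpening the paper leaves implicit; strictly, JCS is not injective on all I-JSON values ($-0$ and $0$ both serialize to \texttt{0}), but your restriction of identity fields to strings already removes that case.
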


The manifest is analogous to supply-chain provenance: it binds the artifact to the source and process that produced it \citep{torresarias2019intoto,slsa2023,nistssdf2022}. Unlike a generic build attestation, it also binds financial authority and permitted command surfaces.

\section{Seven retained evidence layers and registered derivations}
The evidence model uses the layers in \cref{tab:layers}, and their manifest/correlation linkage is illustrated in \cref{fig:layers}. The word \emph{retained} is deliberate: a layer can be useful without being an independent root.

\begin{table}[t]
\centering\small
\caption{Seven retained evidence layers.}
\label{tab:layers}
\begin{tabularx}{\textwidth}{@{}P{0.21\textwidth}P{0.34\textwidth}Y@{}}
\toprule
Layer & Source & Required property \\
\midrule
1. Simulator expectation & registered vector/fixture & expectation only; never financial authority \\
2. Canonical contract state & manifest addresses, ABI views, ordered events & observed from the deployed graph \\
3. Append-only journal & transaction/event identifiers & complete, ordered, no silent overwrite \\
4. Service projection & journal replay & registered materialization, non-authoritative \\
5. Authenticated API & service projection under role/session & registered derivation, route, identity, response, no bypass \\
6. Browser projection & API/service state after a real role control or typed browser client & visible state tied to command/correlation; not presumed independent \\
7. Audit reconstruction & correlation and journal replay & reviewer-readable, separately rebuilt where policy requires \\
\bottomrule
\end{tabularx}
\end{table}

The labels L1--L7 in this paper are explanatory labels corresponding to the table; they do not renumber an immutable dataset schema. For layer $i$, let $I_i$ be the retained input set and $D_i$ the derivation of its output body (excluding its own digest and downstream commitments). Then
\[
h_i=\Hh(\mathrm{JCS}(D_i(I_i,h_M,c))).
\]

\begin{definition}[Registered derivation]
A retained layer record has a registered derivation if it identifies its input set, transformation, root ancestry, implementation/runtime identity, canonical output-body projection, output digest, and correlation/manifest bindings. Body, canonical bytes, digest, and descriptive receipt are different objects. Heterogeneous artifact digests need not be equal; a registered semantic projection supplies the common comparison. Registration does not by itself imply independence.
\end{definition}

\begin{definition}[Multi-root corroboration]
For claim $q$, the previously registered failure model and policy determine the required source-origin, failure-domain, and processing-path relations among its retained observations $R(q)$. Corroboration passes only when every required relation is evidenced; copies or relabelings of one observation do not create another root, and unknown provenance does not establish independence. A graph-cut condition alone is not equivalent to this property. Distinct acquisition origins may share an aggregator, while distinct files may share one origin. The conclusion is claim-specific corroboration under the declared model, not statistical independence.
\end{definition}

\begin{theorem}[Copied-hash insufficiency]
Suppose $h_1=\cdots=h_7=h$, but every layer was generated by copying one fixture value $h$. Then equality provides no evidence that any actual layer observed or reconstructed the claimed state.
\end{theorem}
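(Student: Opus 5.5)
The plan is an indistinguishability argument. I will show that the observed data, the tuple $(h_1,\ldots,h_7)$ with all entries equal to $h$, has the same distribution whether or not any layer actually observed or reconstructed the claimed state. Evidence in the sense used by the Non-substitutability proposition must discriminate between the claim holding and failing. If the observation is identical under both hypotheses, it cannot support either one.

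First I fix what "generated by copying one fixture value" means formally. Each retained output digest is produced by a map $h_i = \kappa_i(h)$ whose only input is the fixture digest $h$. In particular, $\kappa_i$ does not read the layer inputs $I_i$ and does not evaluate $D_i$. Under this hypothesis, $h_i$ is a constant function of $h$: it is the same for every system state $\sigma$ and every input set $I_i$. I then build two worlds that share the same fixture (Layer~1 expectation) and hence the same $h$.

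\begin{itemize}
\item In world $W_1$, the canonical contract state, journal, projections and audit reconstruction really do realize the claimed state $q$.
\item In world $W_0$, no transaction occurred, or a different state arose.
\end{itemize}

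In both worlds the copying procedure outputs $h_1=\cdots=h_7=h$. So the event "all seven digests are equal" has probability $1$ under $W_0$ and under $W_1$. Its likelihood ratio is $1$, and conditioning on it leaves the posterior on $q$ unchanged.

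Next I connect this to the paper's own definitions, to show that the statement is not an artefact of the chosen formalisation. Under copying, the tuple fails the Registered derivation definition. The recorded input set and transformation are the fixture and the identity, not $I_i$ and $D_i$. It also fails Multi-root corroboration: all seven records share one root, and copies "do not create another root". So no promotion condition that depends on derivation or roots can be met by the equality alone. Collision resistance does not help here. It ties $h$ to a unique preimage, and that preimage is the fixture body, which is by definition an expectation only (Layer~1 in Table~\ref{tab:layers}), never an observation.

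The main obstacle is stating precisely what "provides no evidence" means without importing a probabilistic model the paper never fixes. I will first use the possibilistic version: there exist two executions, one satisfying $q$ and one not, that produce identical retained digests. This suffices for any sound verifier that sees only the digests. I will add the likelihood-ratio remark as an informal gloss. I also need to be careful to allow the fixture value to coincide with what a genuine derivation would have produced. The construction handles this: equality is consistent with truth but is not caused by it, and that is exactly the insufficiency the theorem claims.
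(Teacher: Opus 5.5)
Your proposal is correct and follows the paper's own argument. The paper observes that the equality is entailed by the copying procedure alone and is independent of system execution, so it has zero discriminating power between correct, incorrect, and no execution; your two-world construction ($W_0$ versus $W_1$ with identical retained digests) makes this precise, and your remarks on registered derivations and multi-root corroboration match the paper's closing point that separately retained inputs with registered derivations are needed to connect equality to the protocol.
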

\begin{proof}
The equality is entailed by the copying procedure alone and is independent of the system execution. Therefore it has zero discriminating power between a correct execution, an incorrect execution, and no execution. Separately retained inputs with registered derivations are necessary to connect equality to the protocol; claims of independent runtime participation would additionally require a proposed multi-root corroboration gate. That gate was not evaluated for the registered local cohort and remains \texttt{NOT\_EVALUATED}.
\end{proof}

\begin{figure}[t]
\centering
\begin{tikzpicture}[node distance=0.58cm and 0.7cm,
  layer/.style={draw=axblue!65,fill=axgray,rounded corners=2pt,minimum width=3.15cm,minimum height=0.8cm,align=center,font=\small},
  arr/.style={-{Latex[length=2.2mm]},thick,draw=axnavy!80}]
\node[layer] (m) {Sealed manifest};
\node[layer,right=of m] (c) {Contract / selector};
\node[layer,right=of c] (r) {Receipt};
\node[layer,right=of r] (s) {Canonical state};
\node[layer,below=of s] (j) {Journal / service};
\node[layer,left=of j] (a) {API / browser};
\node[layer,left=of a] (u) {Audit bundle};
\draw[arr](m)--(c);\draw[arr](c)--(r);\draw[arr](r)--(s);\draw[arr](s)--(j);\draw[arr](j)--(a);\draw[arr](a)--(u);
\node[below=0.45cm of u,align=center,font=\small,text=axteal!60!black] (corr) {shared correlation identity $c$\\and manifest digest $h_M$};
\draw[dashed,draw=axteal!70!black] (corr.west)-|(m.south);\draw[dashed,draw=axteal!70!black] (corr.north)--(u.south);
\end{tikzpicture}
\caption{Manifest-bound retained evidence chain. The diagram is sequential for readability; verification also checks cross-links, provenance, and registered derivations. A proposed multi-root corroboration gate would be required for a stronger independence claim, but was not evaluated here.}
\label{fig:layers}
\end{figure}

\section{Correlation identity}
A scenario must be traceable across user, service, chain, and audit layers. Define
\[
c=\Hh(\mathrm{JCS}(B_c)),
\]
Here $B_c$ is the fixed JSON input body binding scenario, manifest digest, role, command, and idempotency domain under the retained serialization profile. The tuple description does not specify an additional wire format. Exact historical digests depend on that profile and its fixed vectors; the notation does not assert new codec compatibility. The retained correlation ledger should include:
\begin{itemize}
\item UI route, control ID, role/session ID;
\item API endpoint, command ID, payload and idempotency-key digests;
\item saga, inbox, and outbox IDs;
\item emulator order/match/settlement IDs;
\item settlement-evidence ID;
\item contract address, selector, calldata digest, transaction hash;
\item ordered event identifiers;
\item journal entries and projection identities;
\item browser and audit reconstruction identities.
\end{itemize}

\begin{proposition}[Correlation exclusion]
If the verifier rejects events, addresses, selectors, or projections that are not manifest-authorized or not bound to $c$, a legacy bridge or off-path action cannot silently satisfy the canonical claim.
\end{proposition}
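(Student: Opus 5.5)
The argument is by contraposition on the verifier's acceptance condition. I would formalize the claim as follows: the canonical claim $q$ is \emph{satisfied} only if every layer record in $R(q)$ passes the verifier, and the verifier's acceptance predicate includes the two filters named in the hypothesis. The first filter is \emph{manifest authorization}: every address, selector, and projection identity appearing in the record must belong to the set fixed by $S$, and the record carries $h_M$. The second is \emph{correlation binding}: every event, journal entry, and projection must carry, or be linked through the correlation ledger to, the same $c=\Hh(\mathrm{JCS}(B_c))$.

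A legacy bridge or off-path action $a$ contributes \emph{silently} if some accepted record depends on $a$ while $a$ is absent from the manifest-authorized, $c$-bound evidence. I would show that every such $a$ falls into one of two cases.

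\textbf{Case (i).} The action $a$ uses an address or selector outside $S$. Then either the record exposes that identity and the first filter rejects it, or the record claims an in-manifest identity. In the second situation, the record's retained inputs (calldata digest, contract address, transaction hash in the ledger) disagree with the chain observation. By the Drift visibility proposition, reconciling them requires a collision in $h_M$ or a verifier defect.

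\textbf{Case (ii).} The action $a$ uses authorized identities but a different scenario or command. Then its correlation body $B_{c'}$ differs from $B_c$, since $B_c$ binds scenario, role, command, and idempotency domain. So $c'\neq c$ except with a collision, and the second filter rejects it.

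Combining the two cases, any accepted chain consists only of actions that are authorized and bound to $c$. Equivalently, an off-path action appears in an accepted chain only via a hash collision, which is negligible under collision resistance, or a verifier defect, which is excluded by hypothesis.

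The main obstacle is the word \emph{silently}, specifically indirect influence. An off-path action might alter canonical state that an authorized, $c$-bound event later reads, without its own event ever being presented to the verifier. My first approach is to lean on Layer~3's completeness requirement from \cref{tab:layers}. Because the journal is complete and ordered over the manifest addresses, any state change at an authorized contract produces an ordered event. That event is either bound to $c$, in which case it is not off-path, or it is not, in which case it appears as an unbound event in the reconstruction window and is rejected.

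If completeness cannot be assumed, I would weaken the conclusion. The proposition would then state only that such an action cannot satisfy the claim through records it itself supplies, and this residual limitation should be stated explicitly alongside the proposition.
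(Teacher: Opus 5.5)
The paper gives no proof of this proposition; it is meant to follow immediately from its hypothesis. Whatever a legacy bridge or off-path action contributes is, by what \emph{off-path} means, either outside the manifest-authorized identity set or not linked to $c$ through the correlation ledger (transaction hash, selector, calldata digest, saga/inbox/outbox and projection identities). The verifier therefore does not admit it, and a satisfied claim rests only on admitted, authorized, $c$-bound items. No collision resistance enters, because the hypothesis grants the rejection outright, so your closing ``only via a hash collision'' qualifier is unnecessary.

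Where your case analysis departs from this argument, it breaks. In Case (ii) you treat being bound to $c$ as carrying a correlation body. An off-path action with the same scenario, role, command and idempotency domain, such as the same command relayed through a side path or resubmitted as a duplicate, has $B_{c'}=B_c$. Your dichotomy is then not exhaustive, and the hash argument yields nothing. Such an action is excluded only because its transaction is absent from the ledger, that is, by the hypothesis itself. In Case (i) the appeal to Drift visibility does no work: no field of $S$ changes, so no collision in $h_M$ is at stake. Catching a record that relabels an off-manifest call as in-manifest requires checking observed identities against a trusted receipt. That is the fixed-commitment assumption of the tamper section, which this proposition does not assume.

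Your indirect-influence step also fails. Under your own definition of \emph{silently} (an accepted record depends on $a$ while $a$ is absent from the admitted evidence), this step is the whole proof, because Cases (i) and (ii) only show that $a$'s own records are rejected.
\begin{itemize}
\item Layer~3 completeness is completeness of the journal relative to its source transactions and events. It does not guarantee that every state change relevant to an authorized contract yields an event there: functions need not emit, and an authorized contract can read state of a non-manifest dependency such as a price source.
\item \emph{Rejecting} an unbound event means refusing to admit it as evidence, not failing the package. Making the mere presence of foreign activity fatal is an absence predicate of class \texttt{ABSENCE\_NEGATIVE\_CONTROL} (compare gate G11, ``no forbidden bridge''), evaluated over a bounded terminal domain. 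It does not follow from this hypothesis.
\end{itemize}
So drop that route. Your fallback, that an off-path action cannot satisfy the claim through records it supplies itself, is not a concession. It is exactly what the proposition asserts; indirect off-path influence is left to the separate negative-control gates.
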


A correlation ledger is not itself financial authority. Its purpose is to make the exact authority path inspectable and to fail when a required observation is missing.

\begin{definition}[Cross-cohort splice]
A cross-cohort splice is a constructed evidence package in which at least one load-bearing record or layer is drawn from a different run, manifest, scenario correlation, or parent-record identity while the package is presented as one coherent cohort.
\end{definition}

\begin{definition}[No-splice gate]
For claim $q$, the no-splice gate passes only if every required record and derivation edge resolves to the same declared cohort identity, each scenario correlation occurs exactly once, and no layer is admitted through an identity not reachable from that cohort's manifest and ledger root.
\end{definition}

\begin{proposition}[No-splice soundness under complete identity validation]
Assume collision resistance, a trusted fixed cohort commitment, and complete validation of run, manifest, correlation, scenario-record, and layer-root identities against that commitment. Then replacing a required layer with an object from another cohort causes the no-splice gate to fail unless a checked identity is also forged.
\end{proposition}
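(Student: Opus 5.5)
The proof is a contrapositive case split. Suppose a required layer record $\ell$ of the declared cohort $K$ is replaced by an object $\ell'$ drawn from another cohort $K'\neq K$. Suppose also that the no-splice gate still passes. We show that some checked identity must then be forged or collide.

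First I would make the cohort commitment explicit. Let $C_K=\Hh(\mathrm{JCS}(B_K))$, where $B_K$ fixes the run identity, the manifest digest $h_M$, the set of scenario correlations $\{c\}$, the parent-record identities, and the ledger root from which layer roots are reachable. The trusted fixed commitment is exactly $C_K$. By \emph{complete identity validation}, the verifier recomputes each identity field of every admitted record and derivation edge and checks it against $C_K$.

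Next I would split on why $K'\neq K$. By the definition of a cross-cohort splice, $\ell'$ differs from the legitimate $\ell$ in at least one of five identities: run, manifest, correlation, scenario-record, or layer root. Each case is handled the same way.
\begin{itemize}
\item \emph{Manifest.} If $\ell'$ was produced under a different manifest body, then its digest is $h_M'\neq h_M$ unless there is a collision. This is the Drift visibility proposition.
\item \emph{Correlation or scenario record.} If $\ell'$ carries a different correlation body $B_c'$, then $c'=\Hh(\mathrm{JCS}(B_c'))\neq c$ except under a collision. The gate's clause that \emph{each scenario correlation occurs exactly once} handles the variant where a correlation of $K$ is reused or duplicated in place of another.
\item \emph{Run or layer root.} Here the relevant clause is that no layer is admitted through an identity not reachable from the manifest and ledger root. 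Reachability is certified by digests chained from $C_K$, so a root from $K'$ could only be accepted if some hash along that chain collided.
\end{itemize}
In every case, passing the gate requires either that the verifier's recomputed value equals the value committed in $C_K$ or that $\ell'$ carries the identity field of $K$. The first option is a collision. The second is precisely ``a checked identity is also forged.''

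Finally, I would note that the content of $\ell'$ cannot bypass the identity check. The layer digest is $h_i=\Hh(\mathrm{JCS}(D_i(I_i,h_M,c)))$, so $h_M$ and $c$ enter the hashed body itself. An attacker therefore cannot keep $K$'s identity fields while substituting $K'$'s body without a second-preimage or collision. If the attacker also rewrites the embedded fields, that again counts as forging a checked identity.

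The main obstacle is making \emph{complete} validation do real work. I must argue that the five identities listed exhaust the ways records can differ across cohorts, so that there is no load-bearing field outside the commitment. Concretely, I need that the splice definition's ``different run, manifest, scenario correlation, or parent-record identity'' maps one-to-one onto the fields committed in $B_K$. I would state this correspondence as the precise meaning of the completeness hypothesis rather than try to derive it. This keeps the conclusion honest: the result is soundness relative to the committed identity schema, not an absolute guarantee.
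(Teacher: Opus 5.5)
Your proposal is correct and follows the paper's own argument. A foreign object must differ in some declared run, manifest, correlation, scenario-record, or root-ancestry identity; complete validation checks each of these against the fixed cohort registry; so the substitution either changes a checked digest (barring a collision) or violates an identity predicate. Your explicit commitment $C_K$, the per-identity case split, and your choice to treat exhaustiveness of the identity list as the content of the completeness hypothesis make precise what the paper's first sentence assumes without stating.
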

\begin{proof}
A foreign object differs in at least one declared run, manifest, correlation, scenario record, or root ancestry. Complete validation checks these fields against the cohort registry. The substitution therefore changes a checked digest or violates an identity predicate.
\end{proof}

\section{Evidence graph and independence model}
\subsection{Directed evidence graph}
As illustrated in \cref{fig:derivation-roots}, represent an evidence package as a directed acyclic graph
\[
G_E=(V_E,E_E),
\]
where each vertex is an evidence item and each directed edge records a derivation or reference relation. Root vertices are retained observations or registered expectations; non-root vertices are projections, normalizations, package manifests, or review decisions.

A valid graph must satisfy:
\begin{enumerate}[label=(E\arabic*)]
\item every non-root vertex names all load-bearing parents;
\item every edge is typed, for example \textsf{observed-from}, \textsf{derived-from}, \textsf{replayed-from}, or \textsf{reviewed-against};
\item each scenario-scoped vertex binds to that scenario's correlation and approved manifest; collection, package, and replay records bind multiple scenarios through validated membership edges;
\item the computational digest/derivation graph is acyclic; an excluded descriptive back-reference is not an edge in that graph and does not permit checksum self-reference;
\item a layer claimed to have an independent runtime root cannot have the compared layer's normalized output as its only parent.
\end{enumerate}

\begin{definition}[Derivation cut]
A derivation cut for claim $q$ is a set of supporting observations or intermediate artifacts whose removal disconnects all retained support for $q$. It describes dependency topology, not independently established failure domains.
\end{definition}

A single intermediate artifact can be a support bottleneck without erasing distinct upstream acquisition origins. Conversely, separate graph roots can conceal a shared acquisition or failure domain. Thus cuts diagnose dependencies but do not replace the registered independence assumptions. Domain-separated hashes do not create new observations.

\begin{proposition}[Common-root limitation]
If every actual-layer record is a deterministic function of one generated artifact $z$ and no independent observation of the claimed financial state is retained, then the evidence proves consistency with $z$, not independent participation of the corresponding runtime layers.
\end{proposition}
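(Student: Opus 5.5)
The plan is to model the situation formally and argue by indistinguishability, following the pattern of the copied-hash insufficiency theorem. Fix the generated artifact $z$ and suppose every actual-layer record $r_i$ ($i\in\{2,\ldots,7\}$, or whichever layers the claim uses) satisfies $r_i=f_i(z)$ for registered deterministic functions $f_i$. Then the entire retained package, including all digests $h_i=\Hh(\mathrm{JCS}(\cdot))$, is a function $F(z)$, together with the fixed manifest and correlation identities. The first step is to make ``proves consistency with $z$'' precise. The package certifies exactly the predicate $\exists z:\ \text{package}=F(z)$ together with the derivation-correctness of each $f_i$. This is the consistency claim, and it is genuinely supported, because a verifier can recompute each $f_i(z)$ and compare.

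The second step shows that independent participation is not established. Consider two worlds. In $W_1$ the runtime layers actually executed and each happened to emit $f_i(z)$. In $W_0$ the runtime layers did not execute, or executed incorrectly, and the records were produced by applying $f_i$ to $z$ offline. By hypothesis no retained observation of the financial state exists outside the image of $F$. Every retained byte is therefore identical in $W_0$ and $W_1$, so any verifier decision is identical in both worlds. A predicate that holds in $W_1$ but fails in $W_0$, namely independent runtime participation, therefore cannot be decided by the package. This mirrors the zero-discriminating-power argument in the copied-hash insufficiency theorem; the present statement generalizes it from literal copying to arbitrary deterministic functions.

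The third step ties the argument to the graph framework. In $G_E$ the singleton $\{z\}$ is a derivation cut for $q$, and every actual-layer vertex has $z$ as its only root ancestor. Under the multi-root corroboration definition, relabelings or deterministic transforms of one observation do not create new roots, and domain-separated hashes do not create observations. Hence any required source-origin relation fails. Condition (E5) is likewise violated for any layer claimed to have an independent runtime root.

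I expect the main obstacle to be the second step. The hypothesis ``no independent observation is retained'' must be stated carefully enough that the world $W_0$ is admissible. Specifically, the manifest binding $h_M$ and correlation $c$ must not themselves carry runtime-observed information. I would handle this by noting that $h_M$ and $c$ are fixed by $S$ and $B_c$ before execution, and can therefore be treated as constants available in both worlds. Collision resistance plays no role here, because the argument concerns identical bytes, not forged ones.
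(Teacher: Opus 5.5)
Your proposal is correct, and its core (the two-world indistinguishability in your second step) is the paper's argument: every evidence path factors through $z$, so any generator that emits the same $z$ leaves every retained record unchanged, and the package cannot discriminate real participation from generated consistency. Your first and third steps are extra scaffolding the paper does not use, and the third is loose: the paper says a derivation cut does not settle independence, and (E5) concerns a compared layer's normalized output being the sole parent, which need not be literally violated here, so treat that step as illustration rather than as part of the proof.
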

\begin{proof}
All evidence paths factor through $z$. Replacing the real system with any generator that emits the same $z$ leaves every derived record unchanged. Therefore the package cannot discriminate real participation from generated consistency.
\end{proof}

\subsection{Independence classes}
The framework distinguishes the following dimensions of independence and separation:
\begin{itemize}
\item \textbf{syntactic independence}: different files, schemas, or domain tags;
\item \textbf{implementation independence}: different code paths or languages;
\item \textbf{runtime independence}: separately running components observe or reconstruct the event;
\item \textbf{verifier-process separation}: a distinct process validates the retained package;
\item \textbf{operator independence}: a separately identified human or organization performs or controls the review under a declared independence policy.
\end{itemize}
None implies the next. In particular, two processes under one operator establish process separation, not operator independence. The final claim matrix should state which class is achieved for each layer.

\begin{figure}[t]
\centering
\begin{tikzpicture}[node distance=0.48cm and 0.65cm,
  root/.style={draw=axteal!80!black,fill=axteal!8,rounded corners=2pt,minimum width=2.8cm,minimum height=0.72cm,align=center,font=\small},
  layer/.style={draw=axblue!65,fill=axgray,rounded corners=2pt,minimum width=2.65cm,minimum height=0.72cm,align=center,font=\small},
  bad/.style={draw=red!65!black,fill=red!5,rounded corners=2pt,minimum width=2.9cm,minimum height=0.72cm,align=center,font=\small},
  arr/.style={-{Latex[length=2.1mm]},thick,draw=axnavy!75}]
\node[root] (sim) {Registered simulator\\expectation root};
\node[root,below=of sim] (chain) {Chain receipt, ABI views,\\ordered-event root};
\node[root,below=of chain] (journal) {Durable journal\\observation root};
\node[layer,right=of journal] (service) {L4 service\\projection};
\node[layer,right=of service] (api) {L5 authenticated\\API};
\node[layer,right=of api] (browser) {L6 browser\\projection};
\node[layer,below=of api] (audit) {L7 audit\\reconstruction};
\node[bad,above=of api] (fake) {Disguised L4 payload\\relabeled as L5};
\draw[arr](journal)--(service);\draw[arr](service)--(api);\draw[arr](api)--(browser);\draw[arr](journal)--(audit);\draw[arr](chain)--(audit);\draw[arr,dashed](service)--(fake);
\end{tikzpicture}
\caption{Registered derivation-root topology. Service, API, and browser layers may share one projection root. The framework therefore reports root ancestry and rejects an L4 object merely repackaged as L5. Multi-root corroboration is imposed only for claims that require independent participation.}
\label{fig:derivation-roots}
\end{figure}

\begin{proposition}[Disguised-layer rejection]
If an L4 service object is serialized under an L5 label without an authenticated API observation and distinct registered derivation, the evidence graph preserves the L4 root ancestry and the L5 gate fails. This distinct-root negative control is specified but was not evaluated in the registered local cohort.
\end{proposition}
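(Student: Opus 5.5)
The argument runs through the evidence-graph validity conditions (E1)--(E5) and the definition of a registered derivation; no cryptographic assumption is needed beyond the integrity of recorded parent edges. First I fix what the L5 gate checks. An object admitted at L5 must carry an authenticated API observation: route, role/session identity, and response bound to $c$ and $h_M$. It must also carry a registered derivation whose transformation and implementation/runtime identity are those of the API layer. By (E1), every non-root vertex names all load-bearing parents. So the disguised object, having been produced by serializing an L4 service object, has as its only load-bearing parent either the L4 output or that output's own ancestry. Relabeling changes the layer tag, not the derivation description $d$ or the source identity $s$ in the tuple $E=(\tau,s,d,h,c,m)$. The L4 root ancestry (journal observation root $\to$ L4 projection) is therefore preserved in $G_E$.

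Next I argue that the gate fails, by cases on what the forger supplies.

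\textbf{Case 1: the registered derivation still names the L4 transformation.} The L5 gate rejects it because the derivation does not match the registered API derivation. This is precisely the ``distinct registered derivation'' missing from the hypothesis.

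\textbf{Case 2: it names an API derivation but no authenticated API observation is retained.} The route/session/response inputs in $I_5$ are absent, so the registered-derivation predicate cannot be satisfied. Here I invoke (E5): a layer claimed to have its own root cannot have the compared layer's normalized output as its only parent.

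\textbf{Case 3: the forger fabricates API fields.} This either changes $h_5=\Hh(\mathrm{JCS}(D_5(I_5,h_M,c)))$ without a retained input to match, or requires inventing an observation. The latter contradicts the hypothesis ``without an authenticated API observation.''

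The main obstacle is making ``preserves the L4 root ancestry'' a genuine consequence rather than an assumption. A dishonest producer could simply omit the L4 parent edge and present the object as a root. I would handle this by noting that an object lacking parents must then be a root. Roots are restricted to retained observations or registered expectations, and an L5 root requires exactly the authenticated observation that is absent, so this route collapses back into Case 2. The ancestry is therefore either recorded or the object is inadmissible. Either way, by the same reasoning as the Common-root limitation proposition, the L5 record adds no discrimination beyond L4.

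Finally I would note that this is a soundness statement about the specified gate logic only. Consistent with the statement's own caveat, the negative control itself was \texttt{NOT\_EVALUATED} empirically, so the proof establishes what the verifier \emph{would} reject, not that the cohort exercised it.
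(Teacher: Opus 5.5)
The paper gives no proof of this proposition. It treats the result as immediate from the specified gate logic. The L5 row of \cref{tab:layers} demands a registered derivation plus route, identity, response and no bypass. The registered-derivation definition records root ancestry. The caption of \cref{fig:derivation-roots} says the framework reports root ancestry and rejects an L4 object merely repackaged as L5.

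Your core argument is exactly that unpacking, and it already proves both conjuncts. Relabeling leaves the source identity and derivation description unchanged, and with them the recorded ancestry back to the journal root. Meanwhile the relabeled object lacks the route/session/response inputs and the API-specific derivation that the L5 gate requires. Your closing remark about \texttt{NOT\_EVALUATED} also matches the paper.

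Two auxiliary steps are wrong and should be cut or repaired.

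\textbf{(E5) in Case~2.} (E5) does not apply there. It constrains only a layer claimed to have an \emph{independent runtime} root, and your paraphrase ``its own root'' drops that qualifier. The L5 gate makes no independence claim:
\begin{itemize}
\item a legitimate L5 record has the L4 projection as a parent;
\item \cref{fig:derivation-roots} lets L4, L5 and L6 share one projection root;
\item multi-root corroboration is \texttt{NOT\_EVALUATED} for this cohort.
\end{itemize}
The operative failure is the one you state just before invoking (E5): the missing API inputs in $I_5$.

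\textbf{Extensions beyond pure relabeling.} These overreach. Case~3 is either outside the hypothesis, because the record carries no API observation at all, or it concerns a forger who fabricates a self-consistent API observation together with its derivation. In the second reading, ``without an authenticated API observation'' describes what actually happened, not what the verifier can check. Rejection then needs authentication, or the fixed-commitment assumption behind the specified-tamper theorem. That contradicts your opening claim that no cryptographic assumption is needed. Similarly, in the omitted-edge scenario you obtain only ``ancestry recorded or object rejected.'' That gives gate failure without ancestry preservation.

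The proposition's hypothesis is the pure relabeling mutation, so simply scope your argument to it. If you want the stronger adversary, add the fixed-commitment assumption explicitly.
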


\subsection{Evidence completeness and inclusion-minimal core}
An evidence package should be complete enough to reproduce the declared claim while avoiding unnecessary secrets, logs, and personal data. Let $F_{\mathrm{core}}$ be the evidence core and let $G_q=\{g_1,\ldots,g_n\}$ be the fixed scientific gate set for claim $q$. Fix a verification context $A_q$ containing the required manifests, schemas, commitments, policy, and declared external toolchain. Separately classified licences, readme files, and transport metadata form $F_{\mathrm{anc}}$; any ancillary record required by the verification context remains part of $A_q$ and is not thereby dispensable.

For $S\subseteq F_{\mathrm{core}}$, write $R_q(S;A_q)$ for the vector of registered gate results, with an explicit undefined value when a required gate cannot be evaluated. The core is \emph{claim-complete} when every required gate is defined and $R_q(F_{\mathrm{core}};A_q)$ equals the registered result vector. It is \emph{inclusion-minimal relative to $G_q$ and $A_q$} only if no proper subset $S\subsetneq F_{\mathrm{core}}$ both leaves every required gate defined and preserves that same registered result vector.

This criterion concerns set inclusion, not minimum cardinality, minimum byte size, or an asserted minimization algorithm. A single-file deletion check establishes only single-deletion irredundancy unless additional monotonicity assumptions are proved. Arbitrary caches, videos, process dumps, and private directories belong in neither class unless a registered gate or fixed verification context explicitly requires them.

\section{Predicate classes and claim promotion}
Not every promotion predicate is monotone under appended evidence. The proposed framework classifies each gate as one of:
\begin{itemize}
\item \texttt{MONOTONE\_POSITIVE}: once true under append-only valid evidence, it remains true;
\item \texttt{TERMINAL\_CLOSURE}: evaluated only when the run or package is sealed;
\item \texttt{ABSENCE\_NEGATIVE\_CONTROL}: absence of a prohibited event over a bounded terminal domain;
\item \texttt{RECOMPUTED\_COMPARISON}: a clean-run, replay, or package comparison recomputed from current inputs;
\item \texttt{MULTI\_ROOT\_CORROBORATION}: a proposed derivation-topology predicate recomputed when roots or edges change; it was not evaluated for the registered local cohort.
\end{itemize}
Blanket upward monotonicity is invalid for the last four classes. Package closure can be broken by an extra member, an absence gate can fail when a prohibited event arrives, and a comparison or root topology can change when evidence is appended.

\section{Claim promotion as a conjunctive predicate}
Fix a nonempty, prospectively registered gate set for claim $q$ and its stated scope. Let the required gates be $g_1(q),\ldots,g_n(q)\in\{0,1\}$, where $1$ means positively established. An unknown required gate cannot be assigned $1$. Define
\[
P(q)=\bigwedge_{k=1}^n g_k(q).
\]
The claim is promoted only if $P(q)=1$.

For a full local E2E claim, representative gates include:
\begin{enumerate}[label=(G\arabic*)]
\item sealed manifest and approved command path;
\item canonical receipt or typed no-mutation rejection;
\item canonical state observation;
\item seven retained derivations;
\item shared correlation identity;
\item no-splice cohort-identity gate;
\item mandatory negative controls;
\item two clean complete packages;
\item the registered selected historical replay;
\item independent final-package review;
\item no forbidden bridge, external write, production asset, or leverage above the scoped limit.
\end{enumerate}

\begin{theorem}[No compensating promotion]
Under the conjunctive predicate, passing additional non-critical tests cannot compensate for a failed required safety or provenance gate.
\end{theorem}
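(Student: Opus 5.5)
The plan is to reduce the statement to the absorbing property of $0$ under Boolean conjunction. First we make precise what ``passing additional non-critical tests'' means inside the framework. The gate set $g_1(q),\ldots,g_n(q)$ is fixed prospectively and is nonempty. A non-critical test is any test $t$ whose result is not one of the registered required gates. Adding it therefore does not change the index set of the conjunction $P(q)=\bigwedge_{k=1}^n g_k(q)$. The second case is that the additional tests are appended to the gate list, giving $P'(q)=P(q)\wedge\bigwedge_j t_j$. We will handle both cases, because an adversarial packager might try either.

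Next, suppose some required safety or provenance gate fails, say $g_{k^\ast}(q)=0$. This covers, for instance, (G1), (G6), (G7), or (G11). We will also treat the case where that gate is unknown. The definition forbids assigning $1$ to an unknown required gate, so an unknown gate counts as not established and is treated as $0$ for promotion purposes. That blocks the evasion route of leaving a failed gate unevaluated. Then
\[
P(q)=g_{k^\ast}(q)\wedge\bigwedge_{k\neq k^\ast} g_k(q)=0
\]
regardless of the other values. In the first case the extra tests do not enter $P$ at all, so $P(q)=0$ is unchanged. In the second case the conjunction factors as $P'(q)=0\wedge\cdots=0$ for any values of $t_j$. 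In either case the promotion rule ``promote only if $P(q)=1$'' does not fire. We will also note that no aggregation over passing tests can reach $1$: $\wedge$ is monotone in each argument, and its value is bounded above by every conjunct.

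The main obstacle is not the algebra but ruling out a silent change of the gate set. A packager could drop $g_{k^\ast}$ from the registered set, or relabel a non-critical test as satisfying it. To close this, we will invoke the prospective registration and fixed-scope hypothesis: the gate set is fixed before evaluation, so removing $g_{k^\ast}$ is a different claim with a different scope, not a promotion of $q$. For relabeling, we will appeal to the non-substitutability proposition on evidence types. A test asserting a logically distinct predicate cannot establish $g_{k^\ast}$, so it cannot set that gate to $1$. With these two points, the statement follows from the case analysis above.
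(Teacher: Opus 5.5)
Your proof is correct and rests on the same fact as the paper's proof: a conjunction with one false conjunct is false no matter how many other conjuncts are true, so extra passing tests cannot make $P(q)=1$, whether they sit outside the registered gate set or are appended to it. Your handling of unknown gates, silent removal of a gate, and relabeled evidence goes beyond the paper's one-line argument; it is sound supplementary hardening but not needed for the theorem. One caveat: the relabeling step relies on the general principle stated in the \emph{proof} of the non-substitutability proposition rather than on its literal statement, which concerns only specific evidence types.
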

\begin{proof}
A conjunction is false when any conjunct is false, independently of the number of true conjuncts. Therefore performance, code coverage, or extra passing cases cannot hide a missing required layer.
\end{proof}

This rule differs from a weighted maturity score. A score may be useful for project management, but it is unsuitable for a claim whose meaning includes every specified layer. The resulting promotion sequence is shown in \cref{fig:ladder}.

\begin{figure}[t]
\centering
\begin{tikzpicture}[node distance=0.62cm and 0.8cm,
  box/.style={draw=axblue!65,fill=axgray,rounded corners=2pt,text width=3.15cm,minimum height=0.78cm,align=center,font=\small},
  gate/.style={draw=axteal!80!black,fill=axteal!7,rounded corners=2pt,text width=3.8cm,minimum height=0.78cm,align=center,font=\small},
  arr/.style={-{Latex[length=2.3mm]},thick,draw=axnavy!80}]
\node[box] (u) {Unit / local assertion};
\node[box,right=of u] (r) {Manifest-bound\\receipt and state};
\node[box,right=of r] (l) {Seven-layer\\correlation};
\node[box,below=of r] (p) {Two clean packages\\+ historical replay};
\node[gate,right=of p] (f) {Independent review\\Scoped local pass eligible};
\draw[arr](u)--(r);\draw[arr](r)--(l);\draw[arr](l)--(p);\draw[arr](p)--(f);
\end{tikzpicture}
\caption{Claim-promotion ladder. Every stage adds a qualitatively distinct predicate; later language must not be applied early.}
\label{fig:ladder}
\end{figure}

\section{Tamper and drift controls}
\subsection{Mutation families}
The verifier should intentionally mutate:
\begin{itemize}
\item manifest, ABI, bytecode, storage, address, role, or selector identity;
\item receipt status, log digest, event order, or transaction index;
\item fixture hash propagated into an actual layer;
\item omitted, duplicated, or reordered journal entries;
\item stale service/API/browser projections;
\item an L4 service payload disguised as an L5 authenticated-API observation (a specified but unevaluated distinct-root control);
\item a scenario record or layer spliced from a different run, manifest, correlation, or cohort root;
\item correlation IDs and idempotency domains;
\item output-root, ZIP members, checksum files, and allowlists;
\item external-write markers, non-loopback routes, production identifiers, or over-$2\times$ settings;
\item duplicate and conflicting settlement evidence;
\item pause target failure, premature unpause, or recovery reexecution.
\end{itemize}

\begin{assumption}[Fixed commitment and complete validation]
The hash function is collision resistant, the verifier checks every declared load-bearing field and binding, and the expected source/package commitment is fixed independently of the candidate being checked. An attacker cannot replace that trusted expected commitment together with the candidate.
\end{assumption}

\begin{theorem}[Specified-tamper detection relative to a commitment]
Under the fixed-commitment assumption, a mutation that changes a committed canonical input or violates a required schema/policy predicate is rejected, except with the negligible collision probability. Recomputing a changed candidate's own checksum is not sufficient to match the fixed expected commitment.
\end{theorem}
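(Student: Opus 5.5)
The plan is a case split on how the mutation acts, followed by a reduction of the only non-trivial case to a collision for $\Hh$. Fix the trusted expected commitment $h^\ast=\Hh(\mathrm{JCS}(X^\ast))$, where $X^\ast$ is the committed canonical input (manifest body, layer output bodies, package member list, and so on). The fixed-commitment assumption makes $h^\ast$ independent of the candidate. Let the mutated candidate be $X'$, and let the verifier compute $h'=\Hh(\mathrm{JCS}(X'))$ from the candidate's content. It accepts only if $h'=h^\ast$ and every required schema/policy predicate holds.

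There are two cases.

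\emph{Predicate violation.} Suppose the mutation violates a required schema or policy predicate, for example a non-loopback route, an over-$2\times$ setting, a disguised layer label, or a duplicated journal entry. The verifier evaluates that predicate, because it checks every declared load-bearing field. Rejection is then immediate and deterministic, with no cryptographic argument needed. This is the same conjunctive reasoning as the No-compensating-promotion theorem: one failed conjunct rejects.

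\emph{Change to committed input.} Suppose the mutation changes a committed canonical input, so $X'\neq X^\ast$. First I show that $\mathrm{JCS}(X')\neq\mathrm{JCS}(X^\ast)$. This uses the fact that RFC 8785 canonicalization is injective on the registered JSON data model: distinct bodies yield distinct byte strings. Next, acceptance would require $\Hh(\mathrm{JCS}(X'))=\Hh(\mathrm{JCS}(X^\ast))$ on distinct inputs, which is a collision. I then give the reduction. Any mutation procedure that passes with probability $\varepsilon$ yields a collision finder that outputs the pair $(\mathrm{JCS}(X'),\mathrm{JCS}(X^\ast))$ with probability $\varepsilon$. So $\varepsilon$ is negligible, and the same argument covers Drift visibility as a special case.

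For the checksum clause, I argue that recomputing the candidate's own checksum changes only a value stored inside the candidate, say $h_{\mathrm{self}}=h'$. The verifier compares against $h^\ast$, not against $h_{\mathrm{self}}$. Self-consistency therefore gives no route to acceptance, and we are back in the collision case. This is also why the manifest's own digest is excluded from $S$ by (E4): otherwise a self-referential checksum would be circular.

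The main obstacle is the hidden step that the digest the verifier compares is genuinely a function of every load-bearing field the mutation touched. A mutation to a field that is outside $\mathrm{JCS}(X)$ and also unchecked by a predicate would slip through. I will close this gap by invoking the complete-validation clause of the assumption, which says each load-bearing field is either committed or predicate-checked; this makes the case split exhaustive. I will also state explicitly that the claim fails if the attacker can substitute $h^\ast$, which the assumption excludes.
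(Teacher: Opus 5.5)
Your proposal is correct and follows essentially the paper's own argument. You split into a change to committed canonical bytes, where acceptance against the independently fixed commitment would require a collision, and a violation of a checked schema/policy predicate, and you dismiss self-recomputed checksums because validation is anchored to $h^\ast$; your explicit collision-finder reduction simply makes precise the paper's one-line ``under collision resistance'' step. Two small remarks: the hypothesis ``changes a committed canonical input'' already gives $\mathrm{JCS}(X')\neq\mathrm{JCS}(X^\ast)$, so no injectivity lemma is needed (and injectivity would be false for raw JSON text, where whitespace and member order are normalized away); and the complete-validation clause does not make your case split exhaustive over all mutations, since the paper explicitly leaves canonicalization-erased differences, admissible-value changes, and omissions outside the schema undetected, so the clause only guarantees that the verifier actually evaluates the digest and predicates for mutations in the two stated classes.
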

\begin{proof}
A changed committed input has different canonical bytes and hence a different checked digest under collision resistance. A mutation outside that class is rejected when it violates a checked predicate. The attacker cannot repair either outcome merely by replacing a self-reported checksum because validation is anchored to the independent expected commitment.
\end{proof}

The result does not reject harmless differences removed by the declared canonicalization, prove the truth of the committed source, or detect omissions outside the schema. Without a trusted anchor, an internally consistent replacement of all files and their checksums is a direct counterexample to unconditional tamper detection. Mutation testing can expose missing checks \citep{jia2011mutation}; it does not create authenticity by itself.

\subsection{Differential and property-based validation}
Independent implementations should canonicalize and verify the same objects. Differential testing catches implementation-specific mistakes \citep{mckeeman1998differential}; property-based testing exercises invariants across generated inputs \citep{claessen2000quickcheck}; smart-contract fuzzers and multi-tool comparisons provide complementary coverage \citep{wustholz2020harvey,durieux2020smartbugs}. None of these methods alone proves a claim, but together they strengthen the evidence package.

\section{Retained evidence package}
A portable evidence package must be reviewable without exposing secrets or relying on temporary paths. The allowlisted contents may include:
\begin{itemize}
\item manifest, schemas, source/release identity, and toolchain metadata;
\item per-scenario command, receipt, state, event, journal, service, API, browser, audit, reconciliation, and validator records;
\item negative-control results;
\item two clean-run indexes and comparison;
\item replay attestation;
\item SBOM and dependency/license inventory;
\item privacy/secret scan and independent review report.
\end{itemize}
Excluded by default are credentials, keys, cookies, raw logs, videos, dumps, private RPCs, personal data, raw third-party venue data, historical harness evidence presented as canonical, and self-referential checksum entries.

\begin{definition}[Package identity]
For an ordered allowlisted payload member set $F=(f_1,\ldots,f_k)$, the package identity is the digest of the deterministic archive containing $F$ and its member manifest. The member manifest lists payload digests, sizes, modes, and paths but excludes its own checksum entry. The outer archive digest is stored outside the archive. These exclusions distinguish archive bytes from semantic case identity without changing historical package bytes.
\end{definition}

Two clean packages may be required to be byte-identical. When raw runtime fields are inherently variable, the policy must separate raw identities from a normalized semantic identity and justify each excluded field. A normalized comparison cannot be silently described as byte-identical.

\section{Historical replay provenance}
Long programmes often lack a frozen replay wrapper or aggregate output hash for an early phase. Backdating a tag or assigning a historical expected hash creates fictitious evidence.

\begin{definition}[Post-hoc reproducibility attestation]
A post-hoc reproducibility attestation binds an already recorded source boundary, a later-created replay wrapper, the pinned environment reconstructed from historical lockfiles, and the newly observed replay result. It is not the original historical release identity.
\end{definition}

For phase $p$, record
\[
B_p=(\textsf{boundary type},\textsf{commit/tag},\textsf{historical replay status},\textsf{new attestation class}).
\]
If no historical aggregate hash existed, the field remains null with a reason. This discipline preserves provenance while still permitting useful later replay.

\begin{proposition}[No retroactive identity]
A post-hoc replay hash cannot be represented as an original historical expected-output hash without making a false statement about what was frozen at the earlier time.
\end{proposition}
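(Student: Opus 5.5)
The plan is to argue directly from the definitions of an original historical release identity and a post-hoc reproducibility attestation. First I will make precise what an ``original historical expected-output hash'' asserts. For phase $p$, such a hash in $B_p$ is a claim of the form: ``at time $t_p$ (the recorded boundary), the value $h^\ast$ was frozen as the expected aggregate output of the phase.'' This is a statement about a past commitment event, not about a computable property of the source.

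Next I will characterise the post-hoc replay hash $h'$. By the definition of a post-hoc reproducibility attestation, $h'$ is produced at some later time $t' > t_p$. It comes from a replay wrapper created after the boundary, run in an environment reconstructed from historical lockfiles. So $h'$ truthfully asserts only ``replaying the recorded source boundary at $t'$ yields $h'$.''

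The core step is then a case split on whether an aggregate expected hash was frozen at $t_p$.
\begin{itemize}
\item \textbf{No hash was frozen.} The provenance discipline requires the historical field to be null with a reason. Writing $h'$ into that field asserts the existence of a commitment event at $t_p$ that did not occur, which is a false statement.
\item \textbf{A hash $h^\ast$ was frozen.} Either $h' = h^\ast$ or $h' \neq h^\ast$. If they differ, labelling $h'$ as the historical value misstates what was frozen. If they coincide, the numerical value happens to be correct, but the representation still attributes to the time $t_p$ a value whose stated provenance is the later replay. That mislabels the evidence type, conflating \texttt{MECHANICAL\_REPLAY} at $t'$ with the original release identity. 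The statement's ``without a false statement about what was frozen'' then reduces to observing that the provenance claim itself is false. For this I will invoke Non-substitutability of evidence types: the two claims assert logically distinct predicates.
\end{itemize}

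I expect the main obstacle to be the coincidence subcase $h' = h^\ast$, since there the hash value is correct and only the temporal and provenance assertion is wrong. I will handle it by treating a recorded identity as the pair (value, provenance and time of freezing) rather than the value alone. This matches the definitions of the evidence item and of $B_p$, which separate the historical replay status from the new attestation class. With that reading, equality of values does not make the provenance component true, and the proposition follows in every case.
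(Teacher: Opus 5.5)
Your argument is correct, and your first case is the reasoning the paper relies on. The paper gives no separate proof. It treats the proposition as immediate from the Definition of a post-hoc reproducibility attestation, which binds a later-created replay wrapper and ``is not the original historical release identity'', together with the rule that an absent historical aggregate hash stays null with a reason. Populating that field therefore asserts a freezing event that never occurred. Your case split exposes something that one-line route hides. In the subcase $h'=h^\ast$, the proposition is actually false under a value-only reading of ``historical expected-output hash''. So the provenance-inclusive reading is a genuine hypothesis of the statement, not a convenience; the paper leaves it implicit in the word \emph{identity}.

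Two points in that subcase need tightening. First, your opening formalization (``at $t_p$ the value $h^\ast$ was frozen'') is itself value-only. Under it, writing $h'$ into the historical field would be a true statement when $h'=h^\ast$. A pair of value and freezing time alone would not help either, since $(h',t_p)$ then coincides with the genuine record. The provenance component has to name the producing derivation---the later-created wrapper run at $t'$ in the reconstructed environment. That is what the source and derivation fields $s,d$ of the evidence-item tuple $E=(\tau,s,d,h,c,m)$ supply, so build it into the meaning of the historical identity from the outset. Second, Non-substitutability of evidence types is not the right lemma. It concerns projections versus canonical transitions and unit tests versus historical replay, not a replay output versus a frozen release identity. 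The subcase is discharged directly by the definition: presenting the replay-derived item as the original asserts that the output of a wrapper that did not yet exist at $t_p$ was what was frozen then. That is false irrespective of value equality.
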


\section{Exactly-once recovery evidence}
A financial recovery claim must prove more than application availability. For command $x$, retain:
\begin{enumerate}[label=(\arabic*)]
\item chain-confirmed transaction and receipt;
\item durable checkpoint written before process termination;
\item old process identity and termination evidence;
\item fresh process identity;
\item inbox/outbox replay record;
\item registry replay/no-op classification;
\item one committed financial effect in the stable operation domain; record the narrower one-transaction criterion when that is the tested fixture;
\item rebuilt journal, service, API, browser, and audit projections.
\end{enumerate}

\begin{definition}[Exactly-once evidence object]
An exactly-once evidence object is the tuple
\[
E_{\mathrm{xo}}=(x,h_M,c,\tau,C,p_{old},p_{new},R,\Delta_f,\Pi),
\]
where $\tau$ is the canonical transaction, $C$ the durable checkpoint, $R$ the replay result, $\Delta_f$ the duplicate financial delta, and $\Pi$ the rebuilt projection set.
\end{definition}

For the retained FIA-010 post-checkpoint fixture, the observed acceptance condition is
\[
\#\{\text{canonical financial transactions for }x\}=1,\qquad \Delta_f=0.
\]
The general safety property counts committed effects per stable operation/key/payload, not transaction submissions. Reverted attempts and distinct settlement/release steps need not share one transaction identifier. Admission alone is not application; a checkpoint-only trace does not test every earlier crash window. Exactly-once message delivery and unconditional eventual completion are not claimed.

\section{FIA case study}
The twelve FIA cases cover the evidence questions summarized in \cref{tab:fia-evidence}:
\begin{longtable}{@{}P{0.09\textwidth}P{0.30\textwidth}P{0.53\textwidth}@{}}
\caption{FIA evidence questions.}\label{tab:fia-evidence}\\
\toprule
FIA & Financial scenario & Evidence question \\
\midrule
\endfirsthead
\toprule
FIA & Financial scenario & Evidence question \\
\midrule
\endhead
01 & successful admission & can UI/API/origination receipts and canonical state be correlated? \\
02 & exact $2\times$ admission & is the leverage policy and calldata identity preserved? \\
03 & risk rejection & is typed rejection proven with zero mutation? \\
04 & partial then final settlement & do distinct admitted receipts bind bounded partial application and final settlement in one lifecycle record? \\
05 & residual and collateral release & are trader residual and collateral released once after canonical debt-clearance eligibility? \\
06 & liquidation to hard-flat & are authorized liquidation request/execution and the hard-flat transition bound to one correlation? \\
07 & waterfall & do reserve, LBP, and Senior state changes conserve ordered loss? \\
08 & conflict & is quarantine observed with zero debt mutation? \\
09 & Senior withdrawal queue & do queued \texttt{CreditPool} shares remain loss-participating until service? \\
10 & recovery & is one financial transaction preserved across process restart? \\
11 & conjunctive capability transition & do FIA-011A stale-before-admission and FIA-011B book-epoch change with $D_t>0$ each pass retained-evidence, causal, fail-closed/no-mutation, negative-control, replay, and report gates? \\
12 & pause/recovery & are target set, guardian pause, timelock, and unpause ordered correctly? \\
\bottomrule
\end{longtable}

The cases are not an aggregate score. A missing evidence layer in FIA-10 cannot be offset by extra tests in FIA-01. FIA-011 is one parent assertion and passes only when both FIA-011A and FIA-011B close with retained raw and normalized evidence, manifest-bound identity, exact transition provenance, causal checks, expected no-mutation or fail-closed semantics, negative controls, replay, and explicit per-subcase report coverage. One passing subcase cannot close the parent, which remains blocked while either evidence path is incomplete. The same principle applies to separately registered benchmark extensions, which are not part of this cohort.

\section{Registered local evidence result}
\interimnote

The registered local cohort contains twelve parent FIA assertions. Each parent retains seven \texttt{ACTUAL\_EVIDENCE} positions, for a total of 84. The bounded result is reported through aggregate identities and counts; confidential operational identifiers remain outside the distributed evidence boundary.

The registered local evidence milestones are summarized in \cref{tab:promotion}.

\begin{table}[t]
\centering\small
\caption{Registered local evidence-chain milestones.}
\label{tab:promotion}
\begin{tabularx}{\textwidth}{@{}P{0.35\textwidth}Y@{}}
\toprule
Object & Observed identity/result \\
\midrule
All-12 cohort & 12 parent records, 84 retained positions \\
Clean materializations & two clean materializations \\
Shared file index & one byte-identical canonical file index \\
Comparison & byte-identical at registered boundary \\
Strict archive & separate verifier accepted the registered archive \\
Ordered replay & selected replay bound to registered archive \\
\bottomrule
\end{tabularx}
\end{table}

The two clean trees were produced through separate source reads and have one canonical file-index identity. The strict packager produced fixed archive content, and a separate verifier reopened the archive bytes. The selected ordered replay preserves registered causal ordering before binding to the strict archive.

FIA-011 provides a useful evidence-contract example. Its parent record contains two mandatory retained subcases: stale capability before admission and a later book-epoch change after a position has opened. The source-owned adapter and focused tests require both members; FIA-03 or a one-receipt placeholder cannot satisfy the parent.

\subsection{Observed, interpreted, and future claims}
\textbf{Observed:} the identities, counts, byte comparison, verifier acceptance, and replay binding above. \textbf{Interpretation:} the evidence architecture preserves a chain of custody from financial assertion through projections, deterministic materialization, packaging, and replay. \textbf{Future hypotheses:} external venues, public networks, real custody, terminal settlement, and production operations require new evidence contracts rather than reinterpretation of this cohort.

\section{Evidence package schemas}
\subsection{Scenario evidence record}
A per-scenario evidence record should contain the following groups:
\begin{enumerate}[label=(\alph*)]
\item \textbf{identity}: schema, scenario, manifest, correlation, source release;
\item \textbf{command}: role, session class, route, endpoint, command, idempotency domain, payload digest;
\item \textbf{chain}: target, selector, calldata digest, transaction, receipt, ordered events;
\item \textbf{state}: terminal block, ABI-view inputs, canonical state projection and digest;
\item \textbf{application}: journal, service, API, browser, and audit inputs and digests;
\item \textbf{negative controls}: mutation identifiers, expected rejection, observed rejection;
\item \textbf{recovery}: checkpoint/process/replay fields when applicable;
\item \textbf{verdict}: gate vector, missing layers, classification, reviewer notes.
\end{enumerate}

A simplified record is:
\begin{verbatim}
{
  "schema": "axient/e2e-evidence/v1",
  "scenario_id": "FIA-010",
  "manifest_hash": "...",
  "correlation_id": "...",
  "command": {...},
  "chain": {...},
  "canonical_state": {...},
  "journal": {...},
  "service": {...},
  "api": {...},
  "browser": {...},
  "audit": {...},
  "recovery": {...},
  "negative_controls": [...],
  "verdict": {...}
}
\end{verbatim}

\subsection{Release index}
The release index should record every scenario, file, digest, test, replay, and review artifact. It also records deliberate absences. A field such as \texttt{historical\_expected\_output\_hash=null} is preferable to inventing a value; the reason field explains that the hash was not frozen at the historical phase.

\subsection{Allowlist and path safety}
The package builder must reject absolute paths, parent-directory traversal, symlinks where prohibited, duplicate ZIP members, executable binaries not declared by policy, and files outside the allowlist. Checksum files exclude their own line to avoid circular identity. Build outputs should be generated in clean OS-temporary or designated durable evidence roots, never collected by recursively archiving an active development directory.

\section{Independent review protocol}
An independent review has two stages.

\subsection{Mechanical verification}
The reviewer verifies:
\begin{enumerate}[label=(M\arabic*)]
\item outer and inner archive hashes;
\item member paths, sizes, modes, and manifest entries;
\item schemas and canonicalization;
\item source, build, toolchain, and manifest identities;
\item every gate and negative control;
\item two clean-run comparison;
\item historical replay provenance;
\item secret, privacy, data-rights, and license boundaries.
\end{enumerate}

\subsection{Semantic verification}
The reviewer then asks whether the retained artifacts actually support the language used. Examples:
\begin{itemize}
\item Does a chain receipt represent the intended selector or merely a harness event?
\item Was the browser state caused by a role control or copied from a fixture?
\item Does a recovery report show a fresh process without a duplicate committed financial effect?
\item Do five projections have separately observed runtime roots under the registered derivation graph, or are they five views from one generator?
\item Does the declared FULL\_E2E scope include the relevant contracts and services?
\end{itemize}

\begin{definition}[Evidence-review outcome]
The review outcome is one of the following:\par
\begin{center}
\texttt{PASS}\qquad
\texttt{PASS\_WITH\_LIMITATIONS}\qquad
\texttt{FAIL\_CLOSED}.
\end{center}
Each outcome is accompanied by a machine-readable finding registry. The outcome does not change the evidence; it constrains the claims that may be made from it.
\end{definition}

\section{Detailed tamper matrix}
The principal mutation targets and required detections are listed in \cref{tab:tamper}.
\begin{longtable}{@{}P{0.23\textwidth}P{0.30\textwidth}P{0.38\textwidth}@{}}
\caption{Representative tamper controls.}\label{tab:tamper}\\
\toprule
Target & Mutation & Required detection \\
\midrule
\endfirsthead
\toprule
Target & Mutation & Required detection \\
\midrule
\endhead
Manifest & change address, ABI, bytecode, role, selector, adapter, schema & manifest verifier rejects before execution or materialization \\
Receipt & flip status, tx digest, topic, data, chain ID & receipt/state verifier rejects \\
Event order & swap block/transaction/log ordering & canonical state digest changes or ordering gate rejects \\
State & copy fixture expected hash into actual state & source-provenance gate rejects \\
Journal & omit or duplicate canonical event reference & journal completeness/replay differs \\
Service & materialize stale or off-manifest state & service projection/manifest gate rejects \\
API & alter role/session or response field & authenticated response and hash mismatch \\
Browser & show cached/reference result & browser/source/correlation proof fails \\
Audit & omit command or selector edge & reconstruction completeness fails \\
Cohort & splice record or layer from another run/manifest/correlation & no-splice identity gate fails \\
Package & inject member, change path, remove file & allowlist/manifest/checksum fails \\
Recovery & repeat a committed financial effect & exactly-once gate fails \\
Boundary & include remote RPC, external write, real asset, over-limit leverage & environment policy fails \\
\bottomrule
\end{longtable}
\FloatBarrier

\section{Claim-language governance}
Technical evidence should map to approved language. Representative mappings are given in \cref{tab:language}.
\begin{table}[t]
\centering\small
\caption{Evidence-to-language examples.}
\label{tab:language}
\begin{tabularx}{\textwidth}{@{}P{0.31\textwidth}P{0.31\textwidth}Y@{}}
\toprule
Evidence & Allowed language & Prohibited promotion \\
\midrule
unit/local test & ``the local test passed'' & ``the protocol is E2E validated'' \\
manifest-bound receipt & ``a canonical local receipt was observed'' & ``the full user workflow completed'' \\
seven retained layers & ``cross-layer evidence with registered derivations was retained'' & ``independent systems all proved the result'' unless a separately evaluated multi-root gate is shown \\
two clean packages & ``the package rebuilt reproducibly under the stated policy'' & ``production is deterministic'' \\
full replay + independent review & ``the scoped local E2E claim passed'' & ``production-safe, audited, or live-venue validated'' \\
\bottomrule
\end{tabularx}
\end{table}

This governance is not cosmetic. It prevents a technically correct narrow result from becoming a misleading public claim.

\appendix
\section{Formal properties of evidence promotion}
\subsection{Predicate-class-aware monotonicity}
If the required gate set changes from $G$ to $G'\supseteq G$, a claim that passed under $G$ may fail under $G'$. This is not evidence regression; it is a stricter claim definition. Historical results retain their original scope and are not retroactively relabeled.

\begin{proposition}[Monotonicity of positive append-only gates]
Fix a manifest and scope. Let $g$ be classified \texttt{MONOTONE\_POSITIVE}, and suppose its truth depends only on the presence and validity of an append-only witness set. Appending additional valid, non-contradictory evidence cannot change $g=1$ to $g=0$.
\end{proposition}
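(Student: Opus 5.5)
The plan is to formalize $g$ as a monotone predicate over a witness set and then read off monotonicity from set inclusion. Fix the manifest $h_M$ and scope. Let $W$ be the current append-only witness set, and let $\mathrm{Valid}(W)\subseteq W$ be the subset of items that pass validation against the fixed manifest and scope. The hypothesis that $g$ ``depends only on the presence and validity of an append-only witness set'' will be rendered as follows: there is a family $\mathcal{R}$ of required witness patterns such that
\[
g(W)=1 \iff \exists\, T\in\mathcal{R}\ \text{with}\ T\subseteq \mathrm{Valid}(W).
\]
In words, $g$ holds exactly when some required witness combination is present and valid. I will state explicitly that this existential-presence form is what the classification \texttt{MONOTONE\_POSITIVE} means. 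The section on predicate classes explicitly excludes \texttt{TERMINAL\_CLOSURE}, \texttt{ABSENCE\_NEGATIVE\_CONTROL}, \texttt{RECOMPUTED\_COMPARISON}, and \texttt{MULTI\_ROOT\_CORROBORATION}, because those involve absence, closure, or recomputation over the whole set.

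The first step is to show that appending preserves the witness set pointwise. An append takes $W$ to $W'=W\cup\Delta$ with no deletion or overwrite, so $W\subseteq W'$.

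The second step, which I expect to be the main obstacle, is to show that validity is stable under appending: $\mathrm{Valid}(W)\subseteq\mathrm{Valid}(W')$. The difficulty is that validity of an item could in principle be context-dependent, since a new item might contradict an old one, for example a conflicting receipt for the same transaction. This is exactly where the hypotheses ``valid'' and ``non-contradictory'' on $\Delta$ are used. I will define ``non-contradictory'' as the condition that no item of $\Delta$ invalidates the validity judgment of any item in $W$. I will also note that validity is evaluated against the fixed manifest and scope, which do not change; the paper's remark about enlarging $G$ to $G'$ covers the case where they do, and that case is excluded here. Under these conditions, each $w\in\mathrm{Valid}(W)$ remains valid in $W'$.

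The third step concludes the argument. If $g(W)=1$, pick $T\in\mathcal{R}$ with $T\subseteq\mathrm{Valid}(W)$. By the second step, $T\subseteq\mathrm{Valid}(W')$, so $g(W')=1$.

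Finally, I will add a short remark explaining why the hypotheses are necessary. If $g$ referred to the absence of an item, or if $\Delta$ were allowed to contradict $W$, then appending could falsify $g$. This matches the earlier statement that blanket upward monotonicity fails for the other four predicate classes.
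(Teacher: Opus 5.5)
Your proposal is correct and takes the paper's approach: the paper's proof also argues that the witness establishing $g=1$ stays present and valid under append-only extension, and that the predicate asserts no closure, absence, comparison-equality, or root-minimality condition. Your existential rendering ($g=1$ exactly when some $T\in\mathcal{R}$ satisfies $T\subseteq\mathrm{Valid}(W)$) and your explicit step $\mathrm{Valid}(W)\subseteq\mathrm{Valid}(W')$ only spell out what the paper condenses into ``by definition.''
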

\begin{proof}
By definition, the witness establishing $g=1$ remains present and valid after append-only extension. Because the predicate does not assert closure, absence, comparison equality, or root minimality, the extension does not invalidate the witness.
\end{proof}

\begin{remark}[Non-monotone gates]
The classes \texttt{TERMINAL\_CLOSURE},\allowbreak \texttt{ABSENCE\_NEGATIVE\_CONTROL},\allowbreak \texttt{RECOMPUTED\_COMPARISON},\allowbreak and \texttt{MULTI\_ROOT\_CORROBORATION} are terminal or recomputed predicates. They must be evaluated again whenever the bounded evidence set or derivation topology changes.
\end{remark}

\subsection{Hash soundness versus semantic completeness}
Collision resistance supports tamper detection relative to a fixed commitment; semantic completeness depends on the schema and actual checks. If a load-bearing field is omitted from canonicalization, the hash can remain stable while the system meaning changes. Mutation testing and independent review are therefore part of the evidence model, not optional QA.

\subsection{Independence is relational}
No layer is independent in isolation. Independence is a relation between derivation paths and failure modes. Two languages can share one faulty specification; two processes can share one stale database; two reviewers can use one verifier. The paper therefore reports independence classes rather than a binary adjective.

\section{Public evidence archive}
The public companion retains sanitized manifests, FIA definitions, schemas, provenance records, checksums, and an offline verifier. The seven named layers are registered derivations whose publication records share accepted cohort metadata; the companion does not publish raw runtime payloads needed to infer seven independent observations. The verifier recomputes repository-file checksums and checks selected required fields, counts, ordering, canonical identifiers, and internal consistency; it cannot reconstruct deliberately unpublished raw roots or independently recompute payload-derived identities. Inspection of the public verifier separates selected structural checks from withheld-runtime attestations; reading that verifier is not execution of the private runtime or full JSON Schema validation.

\section{Relation to assurance and distributed-systems work}
The framework draws on linearizability and transaction-processing ideas \citep{herlihy1990linearizability,gray1993transaction}, causal ordering \citep{lamport1978time}, durable distributed-data design \citep{kleppmann2017ddia}, canonical serialization \citep{rfc8785}, and software supply-chain provenance \citep{torresarias2019intoto,slsa2023,nistssdf2022}. Smart-contract testing research contributes fuzzing, differential analysis, and large-scale tool comparison \citep{wustholz2020harvey,durieux2020smartbugs}. The novel emphasis here is not another testing tool. It is a claim architecture that binds financial authority, runtime identity, cross-layer derivation, reproducible packaging, and promotion language into one reviewable object.

Related work on case-wise conformance and provenance-aware synthetic traces informs the evidence-layer abstraction without contributing an additional cohort or denominator here \citep{aemb2026,avet2026}.

\section{Publication and reproducibility record}
The registered local root topology is the common-cohort case: seven retained positions per FIA are published with registered derivation metadata, while raw derivation payloads are intentionally absent from the public companion. Consequently, the observed result supports cohort consistency and registered provenance, not a claim that seven independent runtime systems observed each FIA. The proposed multi-root corroboration gate was not evaluated for this scope and remains \texttt{NOT\_EVALUATED}.

The public evidence companion retains sanitized manifest, ledger, materialization, package, replay, and binding metadata, together with per-FIA correlation records, schemas, checksums, and an offline metadata verifier. Two clean materializations and the separately verified strict archive provide package-level reproducibility identities; the selected ordered replay provides a registered local replay identity. Raw runtime payloads are excluded, so the public verifier checks the integrity and consistency of distributed metadata rather than reconstructing unpublished raw roots.

The public article reports the evidence method and registered identities. Broader source-code publication, operational deployment, and successor external-operation experiments are governed by separate release and assurance processes.

\section{Limitations}
The framework cannot prove that an oracle is truthful, a venue is solvent, a key is uncompromised, a smart contract has no unknown vulnerability, or a legal claim will be enforced. Multi-root corroboration is a practical, claim-specific property, not metaphysical independence; shared code, libraries, operators, or infrastructure can create common-mode error. Hashes detect changed bytes under specified assumptions, not semantic omissions outside the schema. The current evidence establishes the registered cohort and provenance predicates. Production performance, real custody, external-venue behavior, terminal settlement, and human usability remain separate empirical questions.

\section{Conclusion}
Manifest-bound evidence for financial protocols should be treated as a typed, reproducible proof chain. A receipt, screenshot, test, or copied hash can be useful without proving the whole workflow. The architecture developed here requires separate retained derivations, a shared correlation identity, specified tamper rejection, clean-run reproducibility, historical provenance, and independent review before a claim is promoted. Its most important output is a disciplined, reproducible boundary between what was observed, what was reconstructed, what was replayed, and what remains a future hypothesis.

\section*{Funding}
No external funding was received for this research.

\section*{Competing Interests}
A related pending UAE patent application is disclosed in the Patent Disclosure section.

\section*{Data and Code Availability}
The article reports a registered local engineering-evidence cohort. The retained public evidence companion supports recomputation of its published structure and metadata; it is not an external dataset deposit, a production record, or a live-venue data source.

\noindent\textbf{JEL:} C63, G13, G14, G23.

\section*{AI Disclosure}
OpenAI ChatGPT and Codex were used for editorial and technical assistance during manuscript preparation. The author made all substantive research decisions, reviewed the final manuscript, and assumes full responsibility for its contents.

\section*{Patent Disclosure}
Certain technical aspects described in this article are included in a UAE patent application filed on 30 September 2026 (Application No. MOE-Patents-PTA-0003399-20260930). The application is pending; no granted patent is claimed.

\FloatBarrier
\begingroup
\raggedright
\bibliographystyle{plainnat}
\bibliography{references}

@article{aemb2026,
  author  = {Maksym Nechepurenko},
  title   = {{AEMB: A Deterministic Cross-Language Verification Benchmark for Event-Margin Protocols}},
  journal = {SSRN Electronic Journal},
  year    = {2026},
  month   = {July},
  doi     = {10.2139/ssrn.7216198},
  url     = {https://ssrn.com/abstract=7216198},
  note    = {SSRN abstract 7216198; 22 July 2026; AEMB v0.1.0}
}

@article{avet2026,
  author  = {Maksym Nechepurenko},
  title   = {{AVET: A Provenance-Aware Venue Emulator Trace Dataset for Event Markets}},
  journal = {SSRN Electronic Journal},
  year    = {2026},
  month   = {July},
  doi     = {10.2139/ssrn.7216180},
  url     = {https://ssrn.com/abstract=7216180},
  note    = {SSRN abstract 7216180; 27 July 2026; AVET v0.2.0}
}

@misc{axientfia2026,
  author       = {{AxientLab}},
  title        = {{Axient FIA Public: Reproducibility and Evidence Companion}},
  year         = {2026},
  howpublished = {Public GitHub repository},
  note         = {Public evidence companion; accessed August 2026},
  url          = {https://github.com/AxientLab/axient-fia-public/tree/9a473fbd49cbd4f95c64bb0bd090bbdaeac3f442}
}

@article{herlihy1990linearizability,
  author  = {Maurice P. Herlihy and Jeannette M. Wing},
  title   = {Linearizability: A Correctness Condition for Concurrent Objects},
  journal = {ACM Transactions on Programming Languages and Systems},
  volume  = {12}, number = {3}, pages = {463--492}, year = {1990},
  doi     = {10.1145/78969.78972},
  url     = {https://doi.org/10.1145/78969.78972}
}

@book{gray1993transaction,
  author    = {Jim Gray and Andreas Reuter},
  title     = {Transaction Processing: Concepts and Techniques},
  publisher = {Morgan Kaufmann},
  year      = {1993}
}

@article{lamport1978time,
  author  = {Leslie Lamport},
  title   = {Time, Clocks, and the Ordering of Events in a Distributed System},
  journal = {Communications of the ACM},
  volume  = {21}, number = {7}, pages = {558--565}, year = {1978},
  doi     = {10.1145/359545.359563},
  url     = {https://doi.org/10.1145/359545.359563}
}

@book{kleppmann2017ddia,
  author    = {Martin Kleppmann},
  title     = {Designing Data-Intensive Applications},
  publisher = {O'Reilly Media},
  year      = {2017}
}

@techreport{rfc8785,
  author      = {Anders Rundgren and Bret Jordan and Samuel Erdtman},
  title       = {{JSON Canonicalization Scheme (JCS)}},
  institution = {Internet Engineering Task Force},
  number      = {RFC 8785},
  year        = {2020},
  doi         = {10.17487/RFC8785},
  url         = {https://doi.org/10.17487/RFC8785}
}

@inproceedings{torresarias2019intoto,
  author    = {Santiago Torres-Arias and Hammad Afzali and Trishank Karthik Kuppusamy and Reza Curtmola and Justin Cappos},
  title     = {{in-toto}: Providing Farm-to-Table Guarantees for Bits and Bytes},
  booktitle = {28th USENIX Security Symposium},
  pages     = {1393--1410},
  year      = {2019},
  url       = {https://www.usenix.org/conference/usenixsecurity19/presentation/torres-arias},
  note      = {Accessed 12 August 2026}
}

@misc{slsa2023,
  author = {{Open Source Security Foundation}},
  title  = {{Supply-chain Levels for Software Artifacts (SLSA) Specification, Version 1.0}},
  year   = {2023},
  url    = {https://slsa.dev/spec/v1.0/},
  note   = {Final specification, 19 April 2023; accessed 12 August 2026}
}

@techreport{nistssdf2022,
  author      = {Murugiah Souppaya and Karen Scarfone and Donna Dodson},
  title       = {{Secure Software Development Framework (SSDF) Version 1.1}},
  institution = {National Institute of Standards and Technology},
  number      = {NIST SP 800-218},
  year        = {2022},
  doi         = {10.6028/NIST.SP.800-218},
  url         = {https://doi.org/10.6028/NIST.SP.800-218}
}

@inproceedings{claessen2000quickcheck,
  author    = {Koen Claessen and John Hughes},
  title     = {{QuickCheck}: A Lightweight Tool for Random Testing of {Haskell} Programs},
  booktitle = {Proceedings of the Fifth ACM SIGPLAN International Conference on Functional Programming},
  pages     = {268--279}, year = {2000},
  doi       = {10.1145/351240.351266},
  url       = {https://doi.org/10.1145/351240.351266}
}

@article{mckeeman1998differential,
  author  = {William M. McKeeman},
  title   = {Differential Testing for Software},
  journal = {Digital Technical Journal},
  volume  = {10}, number = {1}, pages = {100--107}, year = {1998}
}

@article{jia2011mutation,
  author  = {Yue Jia and Mark Harman},
  title   = {An Analysis and Survey of the Development of Mutation Testing},
  journal = {IEEE Transactions on Software Engineering},
  volume  = {37}, number = {5}, pages = {649--678}, year = {2011},
  doi     = {10.1109/TSE.2010.62},
  url     = {https://doi.org/10.1109/TSE.2010.62}
}

@inproceedings{wustholz2020harvey,
  author    = {Valentin W\"ustholz and Maria Christakis},
  title     = {{Harvey}: A Greybox Fuzzer for Smart Contracts},
  booktitle = {Proceedings of the 28th ACM Joint European Software Engineering Conference and Symposium on the Foundations of Software Engineering},
  pages     = {1398--1409}, year = {2020},
  doi       = {10.1145/3368089.3417064},
  url       = {https://doi.org/10.1145/3368089.3417064}
}

@inproceedings{durieux2020smartbugs,
  author    = {Thomas Durieux and {Jo{\~a}o F. Ferreira} and Rui Abreu and Pedro Cruz},
  title     = {Empirical Review of Automated Analysis Tools on 47,587 {Ethereum} Smart Contracts},
  booktitle = {Proceedings of the 42nd International Conference on Software Engineering},
  pages     = {530--541}, year = {2020},
  doi       = {10.1145/3377811.3380364},
  url       = {https://doi.org/10.1145/3377811.3380364}
}
\endgroup
\end{document}